\newcommand{\figwidth}{4in}
\newcommand{\dvspace}[1]{}
\newcommand{\SigRadiusM}[1]{b^k_\mathrm{D}}
\newcommand{\dd}{\mathrm{d}}
\newcommand{\ie}{{\em i.e. } }
\newcommand{\expS}[1]{\exp{\left(#1\right)}}
\newcommand{\ths}{\text{th}}
\def\home{\hbox{\kern3pt \vbox to13pt{}%
		\pdfliteral{q 0 0 m 0 5 l 5 10 l 10 5 l 10 0 l 7 0 l 7 5 l 3 5 l 3 0 l f
			1 j 1 J -2 5 m 5 12 l 12 5 l S Q }%
		\kern 13pt}}
\newtheorem{theorem}{Theorem}
\newtheorem{corollary}{Corollary}
\title{\huge Modeling and Analysis of Heterogeneous Traffic Networks with Anarchists and Socialist Traffic}
\author{Abhishek K Gupta and Adrish Banerjee\thanks{The authors are with the department of Electrical Engineering at IIT Kanpur, India 208016. Email:gkrabhi@iitk.ac.in.}}
\newcommand{\pathall}{\mathcal{P}}
\newcommand{\nodes}{\mathsf{V}}
\newcommand{\edges}{\mathsf{E}}
\newcommand{\usertypeset}{\mathcal{M}}
\newcommand{\socialist}{\mathsf{s}}
\newcommand{\anarchist}{\mathsf{a}}
\newcommand{\pfsocialist}{\mathsf{f}}
\newcommand{\flow}{f}
\newcommand{\pathflow}{f}
\begin{document}
\maketitle

\begin{abstract}
In this paper, we consider a heterogeneous traffic network with  multiple users classes which differ considerably in their path selection objective.  In particular, we consider two classes of users: ones who seek to minimize social cost (socialists) and the ones with typical greedy objective (anarchists) which leads to a heterogeneous game termed as HetGame.  The paper proposes an analytical framework to derive optimal/equilibrium flow in such a heterogeneous game along with a method for the same. The paper considers multiple examples of different networks to derive the optimal traffic assignment.  We introduce two metrics: price of $\alpha$ anarchy and price of good behavior to evaluate the impact of anarchists and implications of central directives. Finally,  the proposed algorithm is implemented for a real traffic network to derive insights.
\end{abstract}

\section{Introduction}
Traffic congestion is an important problem for planning of any urban city, owing to increasing traffic every day. Congestion games are an important framework for studying real-life traffic patterns, both from the road-network perspective and communication network perspective. Owing to a large number of possible routes to reach  their destination, it  may be  difficult for the users to choose the optimal path. Moreover, current traffic systems are complex and heterogeneous. One interesting source of heterogeneity is real-time traffic information. For example, consider a simple network in which some users follow route recommendations from an in-vehicle navigation system while others follow their own understanding. Such a scenario results in a so-called HetGame, a congestion game among heterogeneous users with different path selection objective.  The dynamics and equilibria of HetGames can provide insights in network planning and centralized user routing.

\textit{Related Work:}
Optimality of a path itself depends on the user's perspective. Most literature on optimal traffic assignment is divided into two main directions owing to the following two objectives: 
(1) The individual perspective to decide a route by choosing the least costly available path or (2) the socialist behavior to chose paths which minimizes the average cost of everyone. The first strategy may lead to an equilibrium which is the Nash equilibrium (NE) of corresponding congestion game and can be formed as an optimization problem using Wardrop's equilibrium conditions \cite{Wardrop1952}. However, this strategy is not optimal from the perspective of the city government or the city planner (e.g. department of transportation, or city government) \cite{LEBLANC1975309}. The city planner would want the second strategy so that the total travel cost in the city to be minimized, which is popularly also known as the social optimal situation. Along with analytical studies, past literature has proposed many algorithms and numerical methods to solve the problem in an iterative manner. The problem of traffic assignment  can be written as a general convex optimization which can be solved with standard optimization techniques. One promising technique is to use the Frank-Wolfe Algorithm to determine the optimal flows \cite{LEBLANC1975309}. Along with networks with homogeneous users, there has been some work in studying network with some heterogeneity of users. For example, in \cite{Cole2003}, a traffic network with users with varying trade-off preference between minimizing the latency it experiences and minimizing the money it is required to pay was studied.  Scenarios where a user fraction can cooperate  (known as  Stackelberg equilibrium) was studied in \cite{Swamy2012}. In \cite{Gupta2018}, a traffic network where random users are coexistent along with the regular traffic was analyzed to derive the optimal flow. In \cite{beckmann1956studies}, it was shown that tolls can be used to derive incentives to make nash equilibrium and social optimum the same.  Prior literature studies congestion games in which users have different utility function parameters\cite{Lo2005,Jiang2011}.  These parameters can model varying sensitivity to money, risk, fuel consumption. \cite{Nikolova2017} proposed deterministic strategies for central planner in order to provide desired  flows, including by randomly routing players after giving them specific guarantees about their costs. Along with networks with social and selfish objectives, there can be networks where both such users co-exist. Some of the users are ready to obey central directives and some of them are purely selfish. All the mentioned work have not studied networks having users of such heterogeneous nature which is the main focus of the work.

\textit{Contributions:}
 In this paper, we consider a heterogeneous traffic network with  multiple users classes which differ considerably in their path selection objective.  In particular, we consider two classes of users: ones who seek to minimize social cost (socialists) and the ones with typical greedy objective (anarchists).  This work examines non-atomic congestion games with these two user classes. We develop a framework to derive optimal/equilibrium flow in such a heterogeneous game and propose an algorithm for the same. We also propose two metrics: price of $\alpha$ anarchy and price of good behavior to evaluate the impact of anarchists and implications of central directives. We consider multiple networks to derive the optimal traffic assignment. With the help of analysis, we derive insights about these systems to  help formulate central directives which can  make the social optimal solution to be equal to the equilibrium. We also evaluate the proposed algorithm for a real traffic network.

\section{System Model}

In this paper, we consider a traffic network with heterogeneous users (commuters/packets) termed as heterogeneous traffic network (HetTN). The traffic network $N$ is modelled as a graph $\mathcal{G}$ with nodes $\nodes$ and edges $\edges$. There are $K$ source destination pairs $\{p_k:(a_k,b_k),k=1:K\}$ with the required flow $d_k$ between the source and destination of $k\ths$ pair. A path $P_k$ between a source destination pair $k$ consists of a set of connected edges {\em i.e.}
\begin{align*}
P_k=\{e_1,e_2,\cdots e_n\}: \text{such that } e_1=(a_k,s_1),e_2=(s_2,s_3),\\\cdots 
e_n=(s_n,b_k), s_i\in \nodes\  \forall i\in[1,n].
\end{align*}
 Let $\mathcal{P}_k=\{P_k\}$ denote the set of all path between the $k\ths$ pair. Let $\pathall=\cup_{k=1:K}{\mathcal{P}_k}$. Let the flow in each path $P\in \pathall$ be denoted by $f_P$.  For any edge (link) $e\in\edges$, the total traffc flow in the link is equal to  the total flow in that link as contributed from all paths of all the pairs $\flow_e=\sum\limits_{P \ni e} \pathflow_P$. Each link $e\in E$ has a general latency function $l_e(\cdot)$ such that the cost incurred in that link is equal to \[c_e=l_e(\flow_e).\]
This latency function depends on the ink characteristics, for example, type of the link, its capacity, construction materials.  
We assume the traffic network is heterogeneous so that the required flow $d_k$ can consists of different proportion of different types of users, as defined in the following subsection. 

\subsection{User Classes and General HetGame}
We assume that there are $M$ types of users where different types differ in  their proportion of the total population and traffic path selection strategy. Let $\usertypeset$ denote all types of users. We assume that a type $m\in\usertypeset$ has proportion $\alpha_m$ of the total demand flow for each $k\ths$ pair $p_k$. The complete traffic assignment problem can be seen as a $M$ player game termed as HetGames where the set of $m$ type users can be seen as a single player $m$. Let the combined strategy of this player $m$ is $X_m=(X_{mP})_{P\in\pathall}$.  Given the strategy, the flow of type $m$ users in a path $P$  is given as $\alpha_m X_{mP}$. 
The utility function of the $m\ths$ player is denoted as $u_m(X_m,X_{!m}\alpha)$, where $X=\{X_m:m\in\usertypeset\}$ is the combined strategy and $X_{!m}=\{X_j:j\in\usertypeset, j\ne m\}$.  
Thus, the total flow in any link $e$ for the heterogeneous traffic network  is
\begin{align*}
\flow_e&=\sum_{P \ni e}\sum_{m \in \usertypeset}{\alpha_mX_{m,P}}.
\end{align*}
To clarify the type of users, we describe some of the interesting users types in the following list:
\begin{enumerate}
\item Socialist:
Socialist users aim to minimize the total cost of the network which is given as
\begin{align*}
C(X_\socialist)=-\sum_{e\in\edges} {\flow_e l_e(\flow_e)}.
\end{align*}
In a traffic assignment problem with socialists users only, the optimal flow is given as the argmax of $C$ \cite{Nikolova2014}. Therefore, we can say that the utility function of the socialist player is average cost of the network which is given as
$u_\socialist=-\sum_e {\flow_e l_e(\flow_e)}$ \cite{Nikolova2014}.
\item Anarchist: 
Anarchist users aim to minimize their own cost, and therefore chose the path with the least link cost.  For anarchist flow, the optimal flow is given as the nash equilibrium of the traffic assignment problem. In the absence of any other class of users, the optimal flow for anarchist is given as the solution of the following problem \cite{Wardrop1952}
\begin{align}
X_\anarchist^\mathrm{opt}=\arg\min\sum_{e\in P:X_{\anarchist P}>0} {\int_0^{\flow_e}l_e(v)\dd v}
\end{align}
Therefore anarchist player utility can be given as 
\[u_\anarchist=-\sum_{e\in P:X_{\anarchist P}>0} {l_e(\flow_e)}.\]
\item Proportionally-Fair Socialist:  
The proportionally-fair strategy tries to minimize the total cost function while maintaining fairness---conceptually, fairness requires that not even a small fraction of users experience a particularly high cost. These users minimize the cost $u_\pfsocialist=-\sum_{e} {\flow_e \expS{{l_e(\flow_e)}}}$.
\end{enumerate}

\subsection{$\alpha$-Anarchy HetGame}
In the current paper, we will consider a particular HetGame with two classes of users: socialists with $\alpha_\socialist=(1-\alpha)$ proportion and anarchists with $\alpha_\anarchist=\alpha$ proportion. We term this game as $\alpha$-anarchy HetGame. For each source and destination pair $p_k$, the required anarchist flow is $\alpha d_k$ and socialist flow is $(1-\alpha)d_k$. 

Let $X_\socialist=x$ be the socialist strategy and $X_\anarchist =y$ be the anarchist strategy. For any pair $p_k$, the sum of the socialist flows in all the paths is $\sum\limits_{P_k\in\mathcal{P}_k}{(1-\alpha)x_{P_k}}$, which is required to be equal to the total demand $(1-\alpha)d_k$. This results in the following flow constraint: 
\begin{align*}
\mathcal{S}_\socialist: \sum_{P_k\in\mathcal{P}_k}{(1-\alpha)x_{P_k}} = (1-\alpha)d_k,\ \forall k  
\end{align*}
 for socialists. This constraint is equivalent to
\begin{align*}
\mathcal{S}_\socialist: \sum_{P_k\in\mathcal{P}_k}{x_{P_k}} =d_k,\ \forall k 
\end{align*}
 Similarly, for the anarchist, the flow constraint is given as 
\begin{align*}
\mathcal{S}_\anarchist:\sum_{P_k\in\mathcal{P}_k}{y_{P_k}}=d_k.
\end{align*}
Now, in any edge $e$, the total flow is equal to 
\begin{align}
\flow_e=\sum_{P \ni e, P\in\mathcal{P}}{(1-\alpha)x_{P}+\alpha y_{P}}.
\end{align} 

In the game where both classes co-exist, each class will try to optimize their own flow in presence of the flow of other class according to their own path selection strategy as described above. In the next section, we will develop a framework to derive the joint optimal flow for the users of the two classes.

\section{Joint Optimal Flow}
The $\alpha$-anarchy HetGame consists of simultaneous play between two types of users trying to minimize a different cost function. From a high level, the
anarchists will try to achieve nash equilibrium (NE) given the socialist flow and the socialists must find a socialist flow so that the responding NE strategy from the anarchists achieves the minimum social cost. In this section, we focus on developing a framework to derive the optimal flow. The following Theorem is particularly helpful in solving this two stage problem.

\begin{theorem}
\label{LemmaNEGen}
Given socialist strategy $x$ in any general $\alpha$-anarchy HetGame, the NE of anarchist users is given as the solution of:
\begin{align*}
y^{*}&=\arg\min_y \sum_{e}\int_0^{y_e}l_e(\sum_{P \ni e}{(1-\alpha)x_{P}+\alpha z})\dd z\\
&\text{such that } y_e=\sum_{P \ni e}{y_{P}}, \sum_{P_k \in \mathcal{P}_k}{y_{P_k}}=d_k.
\end{align*}
\end{theorem}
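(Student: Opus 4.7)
The plan is to adapt the classical Beckmann--McGuire--Winsten potential-function argument (which gives the Wardrop equilibrium as the minimizer of a convex potential in the homogeneous case) to the setting where a fixed socialist background flow $x$ is already routed. Concretely, I would treat the socialist strategy $x$ as exogenous data, define
\begin{align*}
\Phi(y) = \sum_{e\in\edges}\int_0^{y_e} l_e\!\left(\sum_{P\ni e}(1-\alpha)x_P + \alpha z\right)\dd z,
\end{align*}
and show two things: (i) every minimizer of $\Phi$ over the feasible polytope $\mathcal{S}_\anarchist$ satisfies the Wardrop equilibrium conditions for the anarchists given $x$, and (ii) conversely, any Wardrop equilibrium is a stationary point of $\Phi$ on $\mathcal{S}_\anarchist$, and hence (by convexity) a minimizer.

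For the forward direction, I would first check that $\Phi$ is convex in $y$: the integrand $l_e((1-\alpha)x_e + \alpha z)$ is nondecreasing in $z$ whenever the link latencies $l_e$ are nondecreasing (the standard assumption on latency functions), so each link term is a convex function of $y_e$, and $y_e$ is linear in $y$. The feasible set defined by $\sum_{P_k\in\mathcal{P}_k}y_{P_k}=d_k$ and $y_P\ge 0$ is a convex polytope, so KKT conditions are both necessary and sufficient. Computing $\partial\Phi/\partial y_P$ gives $\sum_{e\in P} l_e(\flow_e)$, where $\flow_e=(1-\alpha)x_e+\alpha y_e$ is exactly the total edge flow seen by a user on that link. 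This is precisely the path cost an anarchist on path $P$ experiences.

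The KKT conditions for each source--destination pair $k$ then read: there exists a multiplier $\lambda_k$ such that $\sum_{e\in P_k}l_e(\flow_e)\ge \lambda_k$ for every $P_k\in\mathcal{P}_k$, with equality whenever $y_{P_k}>0$. This is exactly Wardrop's first principle applied to the anarchist sub-population: among all paths between $a_k$ and $b_k$, every path carrying positive anarchist flow has the minimum cost $\lambda_k$ (no user can unilaterally deviate to a cheaper path), while unused paths are no cheaper. Since in the non-atomic setting Wardrop's conditions are equivalent to the Nash equilibrium of the anarchist game given $x$, the minimizer of $\Phi$ over $\mathcal{S}_\anarchist$ is a Nash equilibrium, and vice versa.

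The main technical obstacle I anticipate is bookkeeping rather than conceptual: carefully justifying that the marginal-cost computation isolates the per-unit path cost $\sum_{e\in P}l_e(\flow_e)$ despite the scaling factors $(1-\alpha)$ and $\alpha$ buried inside $\flow_e$, and ensuring that the equivalence between minimizers of $\Phi$ and Nash equilibria is tight even when some $y_{P_k}$ are zero (the complementary slackness step). A minor secondary issue is asserting existence of a minimizer, which follows from continuity of $\Phi$ and compactness of the feasible polytope, and uniqueness of the resulting edge-flow vector $\flow_e$, which follows from strict convexity of $\Phi$ in $y_e$ whenever $l_e$ is strictly increasing.
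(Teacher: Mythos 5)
Your proposal is correct and follows essentially the same route as the paper: both treat the socialist flow $x$ as a fixed background absorbed into a modified link latency and then invoke the Beckmann potential-function characterization of the Wardrop equilibrium, with your marginal computation $\partial\Phi/\partial y_P=\sum_{e\in P}l_e(f_e)$ playing the role of the paper's change of variables $y'_e=\alpha y_e$, $z=\alpha w$. The only difference is that you spell out the convexity/KKT argument that the paper delegates to the citation of Wardrop's result.
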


\begin{proof}
See Appendix \ref{proofforlemma}.
\end{proof}

Now, the socialist users (or the player) must choose a strategy $x^*$ such that the total cost of the network $C=\sum_{e}\flow_el_e(\flow_e)$ is minimized.
Therefore the optimal strategy is given as the solution $(x^*,y^*)$ of the optimization problem $\mathcal{S}$ which is simultaneous solution of the two following sub-problems $\mathcal{S}_1,\mathcal{S}_2$:
\begin{align*}
\mathcal{S}_1:\ \ \ \ x^*=&\arg\min_x \sum_{e}f_el_e(f_e)\\
\text{such that }& f_e=\sum_{P \ni e}{(1-\alpha)x_{P} + \alpha {y}^{*}_{P}(x)}, \sum_{P_k}{x_{P_k}}=d_k\\
\mathcal{S}_2:y^*(x)=&\arg\min_y \sum_{e}l'_e\left(y_e,x\right)\\
\text{such that }& y_e=\sum_{P \ni e}{y_{P}}, \sum_{P_k}{y_{P_k}}=d_k.
\end{align*}
where $l'_e(y_e,x)$ is the modified link cost function for  anarchist and given as
\begin{align}
l'_e(y_e,x)&=\int_0^{y_e}l_e\left(\sum_{P \ni e}{(1-\alpha)x_{P}+\alpha z}\right)\dd z.
\end{align}

The link cost function $l_e(\cdot)$  is generally taken as convex ({e.g.}, of the form $l_e(x_e)=b_e+a_e(x_e/c_e)^{d}$, $d\ge1$).   The above optimization is a convex problem for convex link cost functions and can be solved using the following alternative minimization \ref{algo_1} $\mathcal{S}_1$ given $y^*$ and $\mathcal{S}_2$ given $x^*$:
\begin{algorithm}
\caption{Alternating Minimization:}\label{algo_1}
\begin{algorithmic}
\STATE{ Initialize to $x^*_0,y^*_0$, $i=0$.}
\STATE{Solve optimization $\mathcal{S}_1$ to compute $x^*_1,y^*_0$. }
\STATE{ Solve optimization $\mathcal{S}_2$ to compute $x^*_1,y^*_1$. }
\WHILE{Change in solution is greater than tolerance}
\STATE{At step $i$, \begin{align*}
x^*_{i+1},y^*_i =\mathcal{S}_1(x^*_i,y^*_i)\\
x^*_{i+1},y^*_{i+1} =\mathcal{S}_2(x^*_{i+1},y^*_i)
\end{align*}}
\STATE{$i\rightarrow i+1$.}
\ENDWHILE{}

\end{algorithmic}
\end{algorithm}

\section{Analysis for Networks with Linear Latencies}
In this section, we will analyze some special cases and derive optimal flow for these special cases.  To compare the equilibrium/optimum performances, we define the following two terms which help in characterizing impact of a strategy.

\noindent \textbf{Price of $\alpha$-anarchy:} Price of $\alpha$ anarchy is defined as relative increase in the average cost due to presence of $\alpha$ proportion of anarchists {\em i.e.}
\begin{align*}
P_\mathrm{A}&=\frac{\text{Total cost with $\alpha$ anarchists and $(1-\alpha)$ socialists}}
{\text{Total cost of system with no anarchist}} 
\end{align*}
\textbf{Price of good behavior $P_G$:} Second important metric to understand the social implications of social strategy is price of good behavior which indicates the penalty a person may pay being a follower of the central directive. It is defined as the relative cost of following central directive compared to that when being selfish {\em i.e.} 
\begin{align*}
P_\mathrm{G}&=\frac{\text{Average cost of a socialist user}}
{\text{Average cost of an anarchist user}}.
\end{align*}

\subsection{Network with Linear Latencies}
Let us consider a traffic network with linear link cost functions {\em i.e.} $l_e(x_e)=a_ex_e+b_e$. Here $b_e$ is the free flow time and $a_e$ is congestion dependency parameter and both depend on link type. For example, freeways have high $b_e$ and low $a_e$ while city streets have higher $a_e$ and small $b_e$. For this case, the modified link cost function are given as
\begin{align*}
l'_e(y_e,x)&=\int_0^{y_e}\left(
a_e\left(
\sum_{P \ni e}{(1-\alpha)x_{P}}+\alpha z\right)+b_e\right)\dd z\\
&=\frac{a_e\alpha}{2}
 y_e^2+\left(b_e-{a_e}(1-\alpha)\sum_{P \ni e}{x_{P}}\right)y_e.
\end{align*}

So, the optimization problem $\mathcal{S}_1$ can be written as the following convex optimization problem:
\begin{align*}
x^*=&\arg\min_x \sum_{e}a_ef_e^2+b_ef_e\\
\text{such that }& f_e=\sum_{P \ni e}{(1-\alpha)x_{P} + \alpha {y}^{*}_{P}(x)}, \sum_{P_k}{x_{P_k}}=d_k, \ \forall k
\end{align*}
where $y^*(x)$ is given by 
\begin{align*}
y^*(x)=&\arg\min_y \sum_{e}\frac{a_e\alpha}{2}
 y_e^2+\left(b_e-{a_e}(1-\alpha)\sum_{P \ni e}{x_{P}}\right)y_e\\
\text{such that } &y_e=\sum_{P \ni e}{y_{P}}, \sum_{P_k}{y_{P_k}}=d_k.
\end{align*}

 \begin{figure}
 	\centering
     \includegraphics[width=.4\textwidth, trim=100 210 380 100, clip]{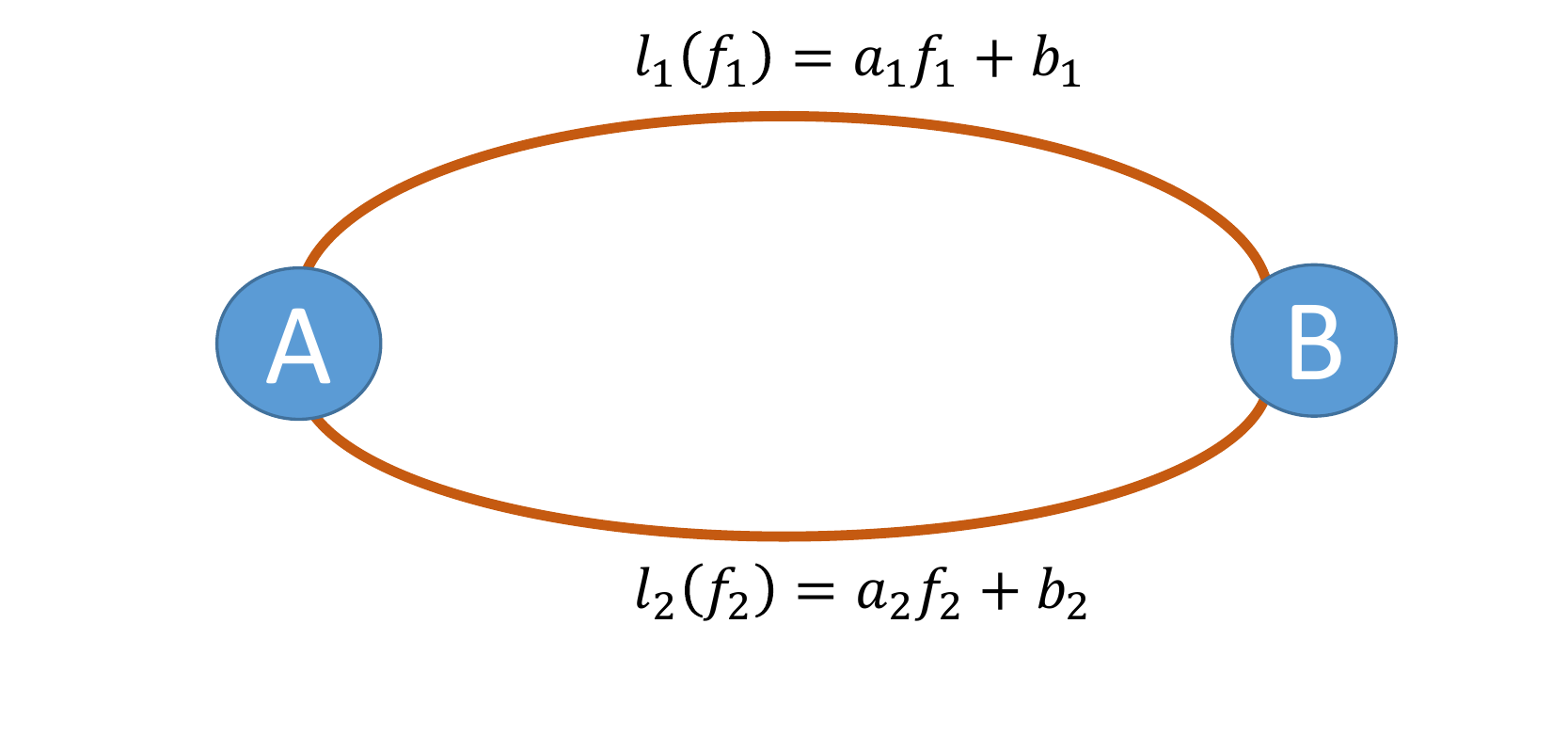}
     \caption{A illustration showing the two node two link network. The required demand flow is 1 between node A and B}
     \label{fig:twolink}
 \end{figure}

\subsection{General Two-link Linear Network}
As a special case of previous subsection, we will consider a general two node two link network (See Fig \ref{fig:twolink}) with linear latency. There are two possible paths  in the network and unit demand flow between the two nodes. The link $i(i=1,2)$ has latency $a_i\flow_i+b_i$ where $\flow_i$ is the flow in that link. Without loss of generality, let us assume that $b_1>b_2$. It can be observed easily that, in the absence of socialists (\ie $\alpha=1$), the equilibrium flow is given by NE \cite{Wardrop1952} as $f_1=1-A,\  f_2=A$ where $\displaystyle A=\frac{b_1-b_2+a_1}{(a_1+a_2)}$. We will assume that $0\le A \le 1$. Also, in the absence of the anarchist traffic (\ie $\alpha=0$), the social optimal solution is given as $f_1=1-f_{2\mathrm{opt}}, \ f_2=f_{2\mathrm{opt}}$ where $\displaystyle f_{2\mathrm{opt}}=\frac{a_1+(b_1-b_2)/2}{a_1+a_2}$.

Now for general $\alpha-$anarchy HetGame, let the anarchist strategy be $(y_1,y_2)$ and the socialist strategy be $(x_1,x_2)$. Using Theorem \ref{LemmaNEGen}, we can compute the NE of the anarchist users as solution of $\mathcal{S}_2$ given $x_2$ as
\begin{align}
y_2(x_2)&=
\begin{cases}
1 &\text{ if } R_1: \ x_2\le \frac{A}{1-\alpha}-\frac{\alpha}{1-\alpha}\\
\frac{A}{\alpha}-\frac{1-\alpha}{\alpha}x_2&\text{ if } R_2: \ \frac{A}{1-\alpha}\ge x_2\ge \frac{A}{1-\alpha}-\frac{\alpha}{1-\alpha}\\
0 & \text{ if } R_3: \ x_2\ge\frac{A}{1-\alpha}
\end{cases}
\label{Eq:y2givenx2}
\end{align}
where $\displaystyle A=\frac{b_1-b_2+a_1}{(a_1+a_2)}$. The above solution indicates that socialist can in fact indirectly force anarchist to chose an arbitrary strategy via a well designed socialist flow. 
 \begin{figure}[ht!]
	\centering
	\includegraphics[width=\figwidth]{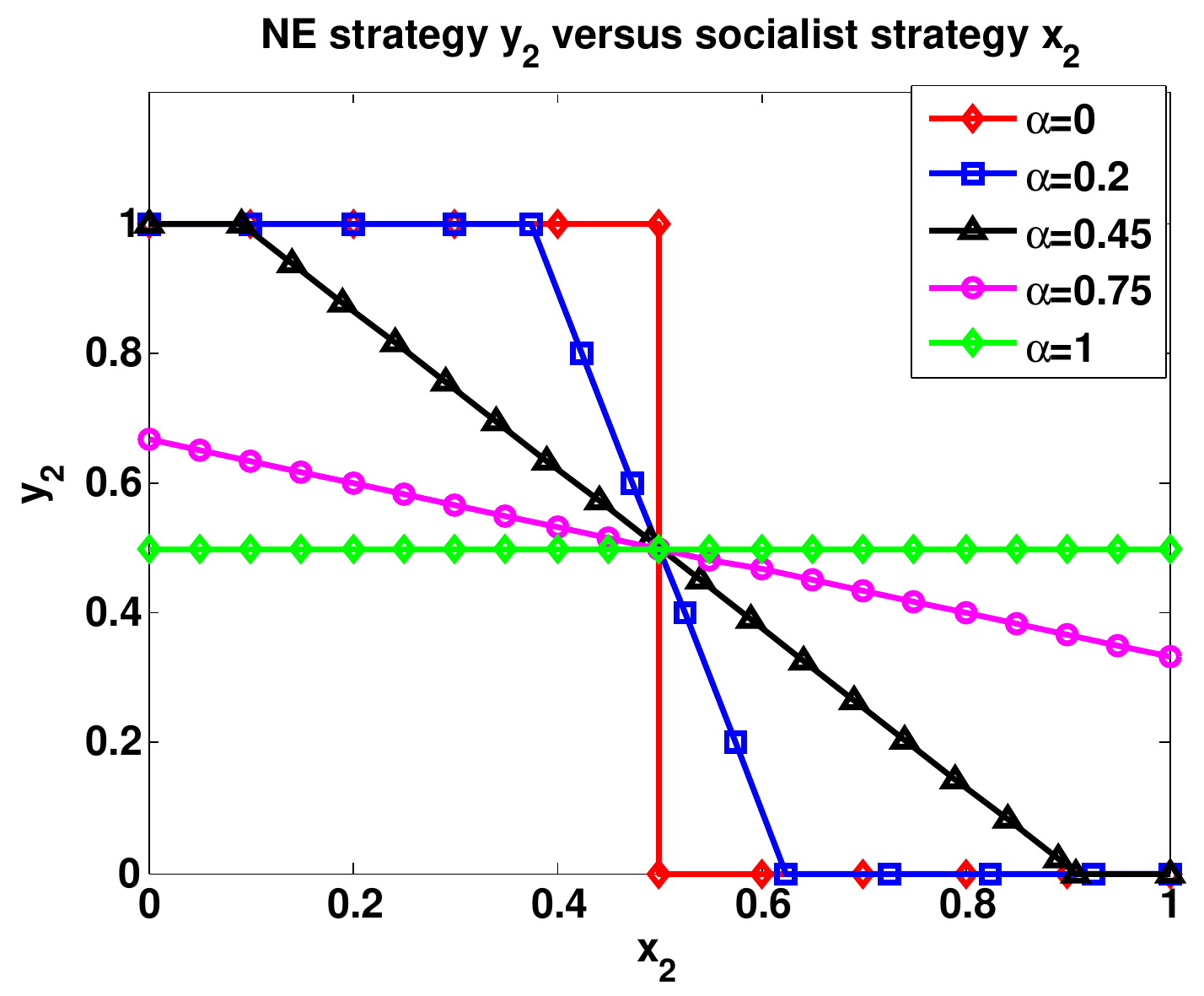}
	\caption{ The NE strategy flow $y_2$ in the second path versus socialist strategy flow $x_2$ of the second path for different values of $\alpha$ in  the considered two-node-two-link linear network with $a_1=0.3,b_1=1,a_2=0.7,b_2=0.8$. 
	 Socialists can in fact indirectly force anarchist to chose an arbitrary  strategy via a well designed socialist flow.  However, control can be limited for particular values of $	\alpha$, such as $ 	\alpha< A$ and $	\alpha>1-A$. }
	\label{fig:figy2}
\end{figure}

For better understanding, we will take a particular instance of the  above mentioned network with the following parameters: $a_1=0.3,b_1=1,a_2=0.7,b_2=0.8$.  In this case, the anarchist equilibrium is $(1-A=0.5,A=0.5)$ and optimal social flow is ($0.6,f_{2\mathrm{opt}}=.4$). Fig \ref{fig:figy2} shows the optimal NE for three regions. For $\alpha=0.2$, 
 $y_2$ shifts from  the value 1 in $R_1=(0,0.375)$ to the value 0 in $R_3=(0.625,1)$. This indicates that by diverting 20\% socialist traffic to the second path,  all anarchists can be forced to take the second path, while by diverting 70\% socialists to the second path, all anarchists can be forced to take the first path. 
It is possible that not all of the above regions exist for particular values of $\alpha$ which can restrict the fraction of anarchists which can be forced or affected by the central planner. For example, for $\alpha=0.75$, only $A/\alpha=66.67\%$ anarchists at max can be forced to take the second path. It can be shown that $R_1$ doesn't exist for $	\alpha<A$ and $R_2$ doesn't exist for $	\alpha>1-A$. Also note that as $\alpha$ increases, the impact of $x_2$ on $y_2$ decreases as evident from the slope in region $R_2$.

Now, given the NE strategy $y=(y_1,y_2)$, the socialists (or a central planner such as city government) will design the flow for socialist such that total cost $C$ is minimized over all the three regions.

1. Region $R_1$: the optimal solution in this region is $x_{2\mathrm{opt}}=(f_{2	\mathrm{opt}}-\alpha)/(1-	\alpha)$. Since $f_{2	\mathrm{opt}}<A$, $x_{2\mathrm{opt}}<(A-\alpha)/(1-\alpha)$ \ie contained in $R_1$. 

2. Region $R_2$: the total flow in second path $f_2$ is always equal to $A$ and total cost is  the same for all values of $x_2$. Hence no minima exists in this region.

3. Region $R_3$: the optimal solution in this region is $x_{2\mathrm{opt}}=(f_{2	\mathrm{opt}})/(1-	\alpha)$. Since $f_{2	\mathrm{opt}}<A$, $x_{2\mathrm{opt}}<(A)/(1-\alpha)$ \ie it falls outside the region $R_3$. Hence no minima exists in this region.

Now, based on the above discussion, we can now state the following result.
\newcommand{\ftopt}{f_{2\mathrm{opt}}}
\begin{theorem}
For the above-mentioned two link network with $\alpha$ anarchy, the following statements hold
\begin{enumerate}
	\item When $\alpha\le f_{2\mathrm{opt}}$, the optimal strategy for socialist and anarchist is 	
	\begin{align*}
	(x_1,x_2)&=\left(
	\frac{
		1-f_{2\mathrm{opt}}
	}
	{1-\alpha}
	,\frac{\ftopt-\alpha}{1-\alpha}
	\right)\\
	(y_1,y_2)&=(0,1).
	\end{align*}
	In this case, the total flow in two paths is  the social optimum flow as the socialist are able to compensate for the anarchist flow and bring the system to the social optimum.
	\item When $f_{2\mathrm{opt}}<\alpha\le A$, the optimal strategy for socialist and anarchist is 	
	\begin{align*}
	(x_1,x_2)&=(1,0)\\
	(y_1,y_2)&=(0,1).
	\end{align*}
	In this case, the socialist cannot compensate for the anarchist traffic. All anarchists take the second link and all socialists take the first link. The total flow in the network  is $(f_1,f_2)=(1-\alpha,\alpha)$.
	\item When $\alpha>A$, all strategies are optimum. The total flow is constant at $A$ and the two links offer the same cost of travel.
\end{enumerate} 
\end{theorem}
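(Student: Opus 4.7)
The plan is to use Theorem~\ref{LemmaNEGen} through its closed form (\ref{Eq:y2givenx2}) to parametrize the anarchist best response by the single scalar $x_2$ (since the flow constraints force $x_1 = 1-x_2$ and $y_1 = 1-y_2$), substitute into the social cost $C = a_1 f_1^2 + b_1 f_1 + a_2 f_2^2 + b_2 f_2$ with $f_1 = 1 - f_2$, and then perform a case analysis on $\alpha$ matching the three regions $R_1$, $R_2$, $R_3$.

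First I would plug the three branches of $y_2(x_2)$ into $f_2 = (1-\alpha) x_2 + \alpha y_2$: on $R_1$, $f_2 = (1-\alpha) x_2 + \alpha$; on $R_2$, a direct algebraic cancellation gives $f_2 \equiv A$; on $R_3$, $f_2 = (1-\alpha) x_2$. The socialist cost is then (i) a convex quadratic in $x_2$ on $R_1$ whose first-order condition $f_2 = f_{2\mathrm{opt}}$ yields the candidate $x_{2\mathrm{opt}} = (f_{2\mathrm{opt}} - \alpha)/(1-\alpha)$; (ii) a constant $a_1 (1-A)^2 + b_1(1-A) + a_2 A^2 + b_2 A$ on $R_2$; (iii) a convex quadratic on $R_3$ whose unconstrained minimizer $f_{2\mathrm{opt}}/(1-\alpha)$ lies strictly below the left endpoint $A/(1-\alpha)$ of $R_3$ because $f_{2\mathrm{opt}} < A$ (a direct check using $b_1 > b_2$). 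Hence the $R_3$ minimum is attained at its left boundary, where $f_2 = A$ and the cost agrees with the $R_2$ constant.

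The three cases of the theorem now follow from where the $R_1$ candidate sits. In Case~1 ($\alpha \le f_{2\mathrm{opt}}$), the candidate is nonnegative and at most $(A-\alpha)/(1-\alpha)$ (using $f_{2\mathrm{opt}} < A$), hence lies in $R_1$, so the social optimum is realised with $f_2 = f_{2\mathrm{opt}}$, yielding the stated formulas. In Case~2 ($f_{2\mathrm{opt}} < \alpha \le A$), the $R_1$ candidate is negative, so the $R_1$ quadratic is strictly increasing on $[0, (A-\alpha)/(1-\alpha)]$ and its minimum on this interval is at $x_2 = 0$, where $f_2 = \alpha$; this value is strictly below the $R_1$ value at the \emph{right} endpoint, which by continuity of $f_2$ equals the $R_2$ constant and the $R_3$ boundary value, so $(x_1,x_2)=(1,0)$, $(y_1,y_2)=(0,1)$ is the unique optimum. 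In Case~3 ($\alpha > A$), $R_1$ is empty since $(A-\alpha)/(1-\alpha) < 0$, the $R_3$ minimum is attained at its left boundary, and $R_2$ is constant; hence every strategy with $x_2 \in R_2$ yields the same total flow $(1-A, A)$ and the same cost, proving the claimed non-uniqueness.

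The main obstacle is the careful bookkeeping at the region interfaces as $\alpha$ crosses the thresholds $f_{2\mathrm{opt}}$ and $A$: one must show that in Case~2 the monotonicity of the $R_1$ quadratic pushes the minimum to $x_2 = 0$ strictly below the $R_2$ constant, and that in Case~3 the $R_2$ constant is genuinely matched by the boundary of $R_3$ so that non-uniqueness covers the whole $R_2$ interval rather than being an isolated artifact. Everything else reduces to routine substitution into the quadratic $C(x_2)$.
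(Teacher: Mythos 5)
Your proposal is correct and takes essentially the same route as the paper: the paper's justification is exactly the region-by-region minimization of the quadratic social cost over the three branches of the best response (\ref{Eq:y2givenx2}), carried out in the discussion immediately preceding the theorem. If anything your version is more careful at the interfaces --- the paper does not note that the $R_1$ candidate $x_2=(f_{2\mathrm{opt}}-\alpha)/(1-\alpha)$ is only feasible (nonnegative) when $\alpha\le f_{2\mathrm{opt}}$, which is precisely the dichotomy separating Cases~1 and~2.
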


 Fig. \ref{spex} shows the cost versus $x_2$ for different values of $\alpha$. It can be observed that as $\alpha$ increase, the optimum value of $x_2$ decreases from $\ftopt$ until it reaches 0 at $\alpha=\ftopt$. 
 
 \begin{figure}[ht!]
 	\centering
 	\includegraphics[width=\figwidth,trim=0 0 0 20,clip]{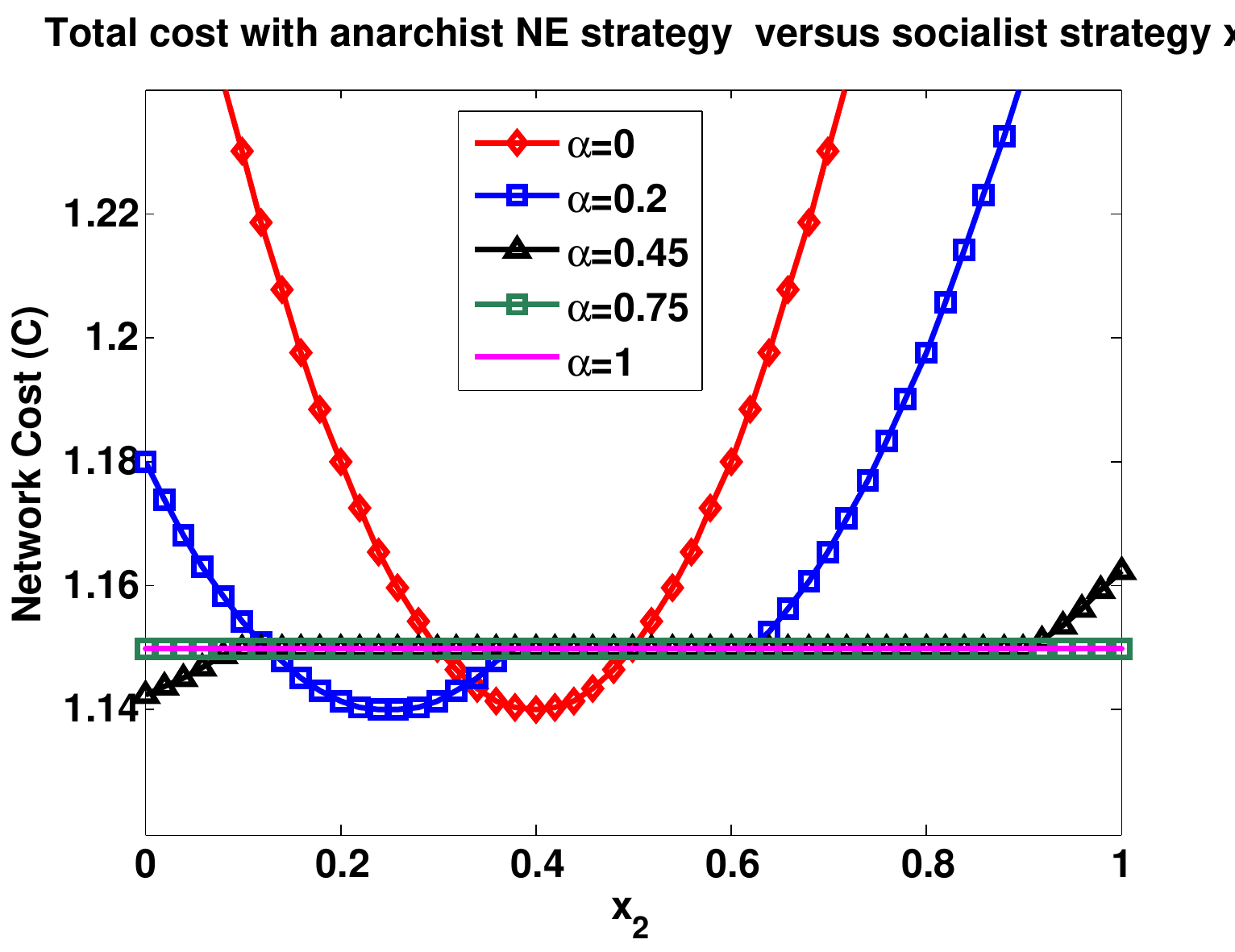}
 	\caption{Total cost of the network with NE strategy $y^*(x_2)$ versus socialist strategy $x_2$ for different value of anarchy $\alpha$ in a general two link HetGame with $a_1=0.3,a_2=0.7,b_1=1,b_2=0.8$. The cost is minimized in region $R_1$.}
 	\label{spex}
 \end{figure}
 \begin{figure}[ht!]
	\centering
	\includegraphics[width=\figwidth]{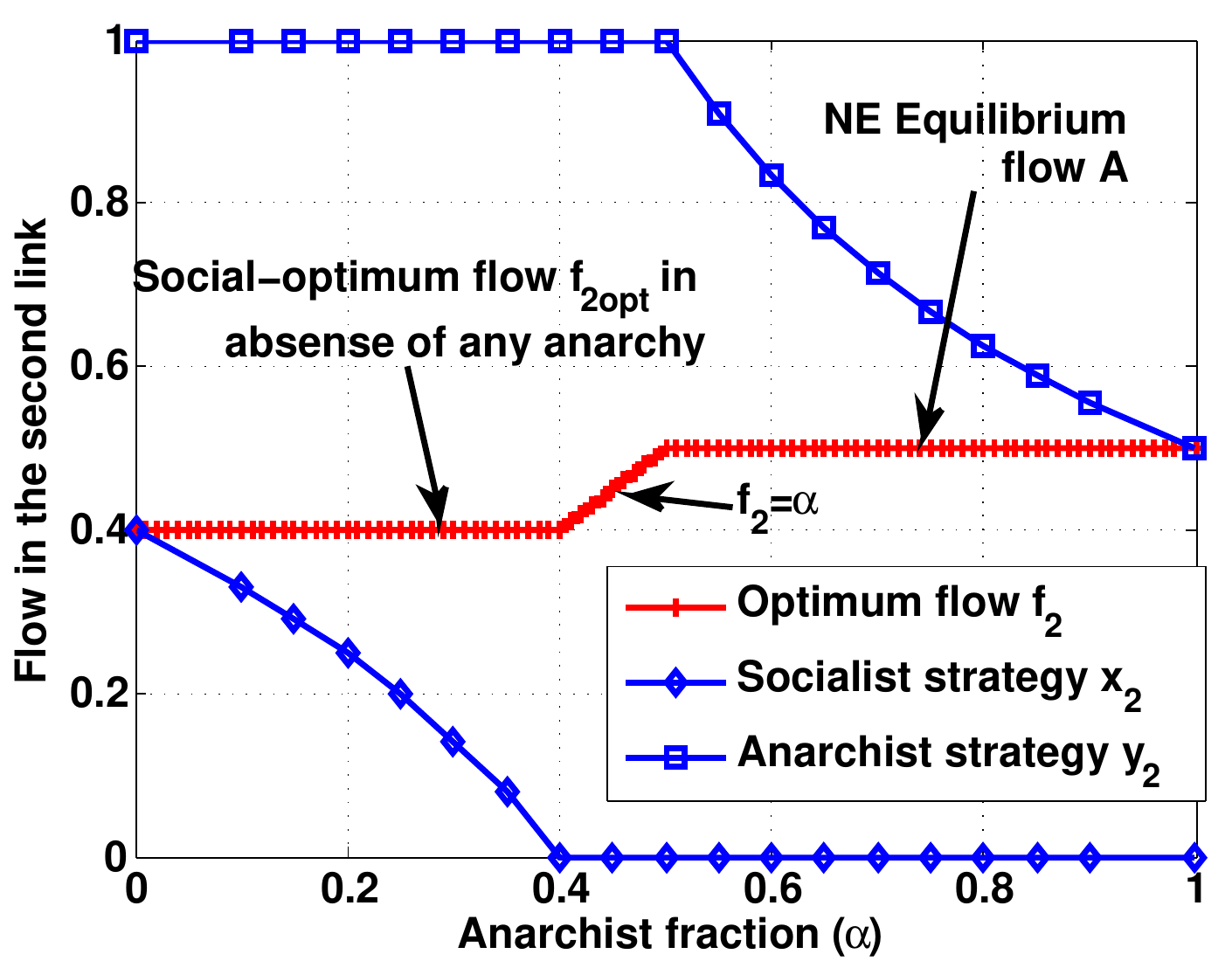}
	\caption{Equilibrium flow in a linear two link HetGame with $a_1=0.3,a_2=0.7,b_1=1,b_2=0.8$. Up to anarchy limited to $\ftopt$, the central planner is able to keep social equilibrium. After this, flow in second path increases until system reaches the pure anarchy (Nash equilibrium) state.}
	\label{fig:flowvsalpha}
\end{figure}
\begin{figure}[ht!]
	\centering
	\includegraphics[width=\figwidth]{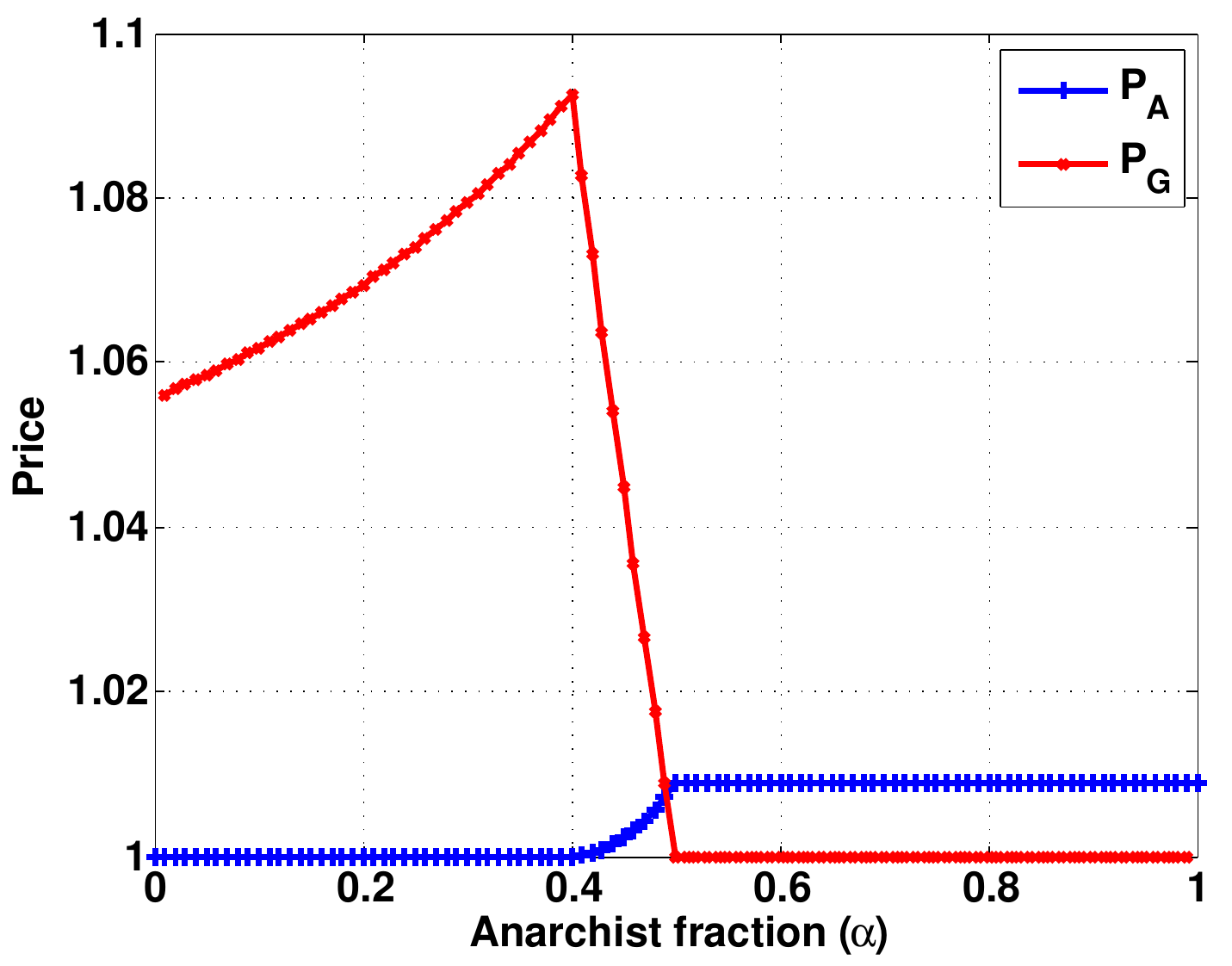}
	\caption{Price of $\alpha$ anarchy and price of good behavior with varying $\alpha$ for a linear two link HetGame with $a_1=0.3,b_1=1,a_2=0.7, b_2=0.8$. }
	\label{fig:prices}
\end{figure}

Fig. \ref{fig:flowvsalpha} shows the equilibrium flow in the network for different values of $\alpha$.
Fig. \ref{fig:prices} shows the variation of price of anarchy and good behavior with $\alpha$. It  shows that increasing anarchy will hurt anarchists also as evident from the increase in the price of anarchy. It can be seen that with increasing fraction of anarchist, price of being a good citizen increases, but after a threshold, it starts decreasing and eventually becomes equal to 1 where all users  start seeing the NE cost in both paths.

\begin{corollary}

Consider a simple two link  network with $a_1=0,b_1=1,a_2=1$ and $b_2=0$. In this case, link cost functions are $l_1(x)=1$ and $l_2(x)=x$. For this case, price of anarchy and good behavior are given as
\begin{align*}
P_\mathrm{A}&=\begin{cases}
1 & \text{ if } \alpha \le 1/2\\
\frac43(1-\alpha+\alpha^2) & \text{ if } \alpha > 1/2
\end{cases}\\
P_\mathrm{G}&=\begin{cases}
\frac{3-2\alpha}{2(1-\alpha)} \hspace{.45in}\ & \text{ if } \alpha \le 1/2\\
\frac1\alpha & \text{ if } \alpha > 1/2
\end{cases}.
\end{align*}
 	\end{corollary}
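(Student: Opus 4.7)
The plan is to specialize Theorem~2 to the given instance $a_1=0$, $b_1=1$, $a_2=1$, $b_2=0$. First I would compute the two structural quantities that parameterize that theorem, namely $A=(b_1-b_2+a_1)/(a_1+a_2)=1$ and $\ftopt=(a_1+(b_1-b_2)/2)/(a_1+a_2)=1/2$. Since $A=1$, the third regime $\alpha>A$ of Theorem~2 does not arise on the unit interval, leaving only the two regimes $\alpha\le 1/2$ and $1/2<\alpha\le 1$, which already matches the case split in the corollary.

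Next I would read off the joint optimum $(x,y)$ from Theorem~2 in each regime and compute the aggregate link flows $f_1$ and $f_2$. In the regime $\alpha\le 1/2$ the theorem gives $(y_1,y_2)=(0,1)$ together with socialists compensating so that $(f_1,f_2)=(1/2,1/2)$ coincides with the pure socialist optimum, and the social cost evaluates to $f_1 l_1(f_1)+f_2 l_2(f_2)=1/2+1/4=3/4$. In the regime $\alpha>1/2$ socialists concentrate on link~1, giving $(f_1,f_2)=(1-\alpha,\alpha)$ and a total cost of $(1-\alpha)+\alpha^2$. The price of $\alpha$-anarchy then follows by dividing each of these totals by the $\alpha=0$ benchmark, which is again the pure socialist cost $3/4$; this yields $P_A=1$ in the first regime and $P_A=\tfrac{4}{3}(1-\alpha+\alpha^2)$ in the second, as claimed.

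For the price of good behavior I would compute the per-path travel costs $l_1(f_1)=1$ and $l_2(f_2)=f_2$ in each regime, weight them by the socialist path proportions $x_P$ and the anarchist path proportions $y_P$ respectively, and take the ratio. In the regime $\alpha>1/2$ all socialists take link~1 at cost $1$ and all anarchists take link~2 at cost $\alpha$, so $P_G=1/\alpha$. In the regime $\alpha\le 1/2$ the ratio is $(x_1\cdot 1+x_2\cdot 1/2)/(0\cdot 1+1\cdot 1/2)$; substituting $x_1=1/(2(1-\alpha))$ and $x_2=(1-2\alpha)/(2(1-\alpha))$ from Theorem~2, routine algebra reduces this to $(3-2\alpha)/(2(1-\alpha))$.

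The calculations are all straightforward once the regime boundaries are identified; the only place where a little care is needed is in interpreting ``average cost of a user'' consistently with the utility functions $u_\socialist$ and $u_\anarchist$ of Section~II, as the flow-weighted average of link costs over exactly the paths that the given class uses. As a consistency check, the two regimes meet smoothly at $\alpha=1/2$: both branches of $P_A$ give $1$ and both branches of $P_G$ give $2$ there, so no separate boundary analysis is required.
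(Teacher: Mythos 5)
Your derivation is correct and is exactly the intended route: the paper states this corollary without a written proof, and it follows by specializing Theorem~2 to $A=1$, $f_{2\mathrm{opt}}=1/2$ and computing the resulting costs, which is precisely what you do. All the numerical values check out, including the continuity check at $\alpha=1/2$.
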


\section{Numerical Results}
To explore behavior of the HetGames laid out in previous sections, we studied a model of a real-world transportation network.  One publicly-available dataset is a 24-node model of the road network in Sioux Falls, SD.  The model characterizes the latency on 76 links connecting 24 nodes and provides trip data in the form of 528 origin-destination pairs. Each latency is a polynomial of form $l_e(x_e) = d\left(1+b(\frac{x_e}{c})^a\right)$.  
As a network grows, it is infeasible to enumerate the set of all paths $\mathcal{P}_k$ connecting a given source-destination pair. Therefore, we have selected $\mathcal{P}_k$ to be the four shortest zero-user paths using Dijkstra's algorithm.  

\begin{figure}[ht!]
	\centering
	\includegraphics[width=\figwidth]{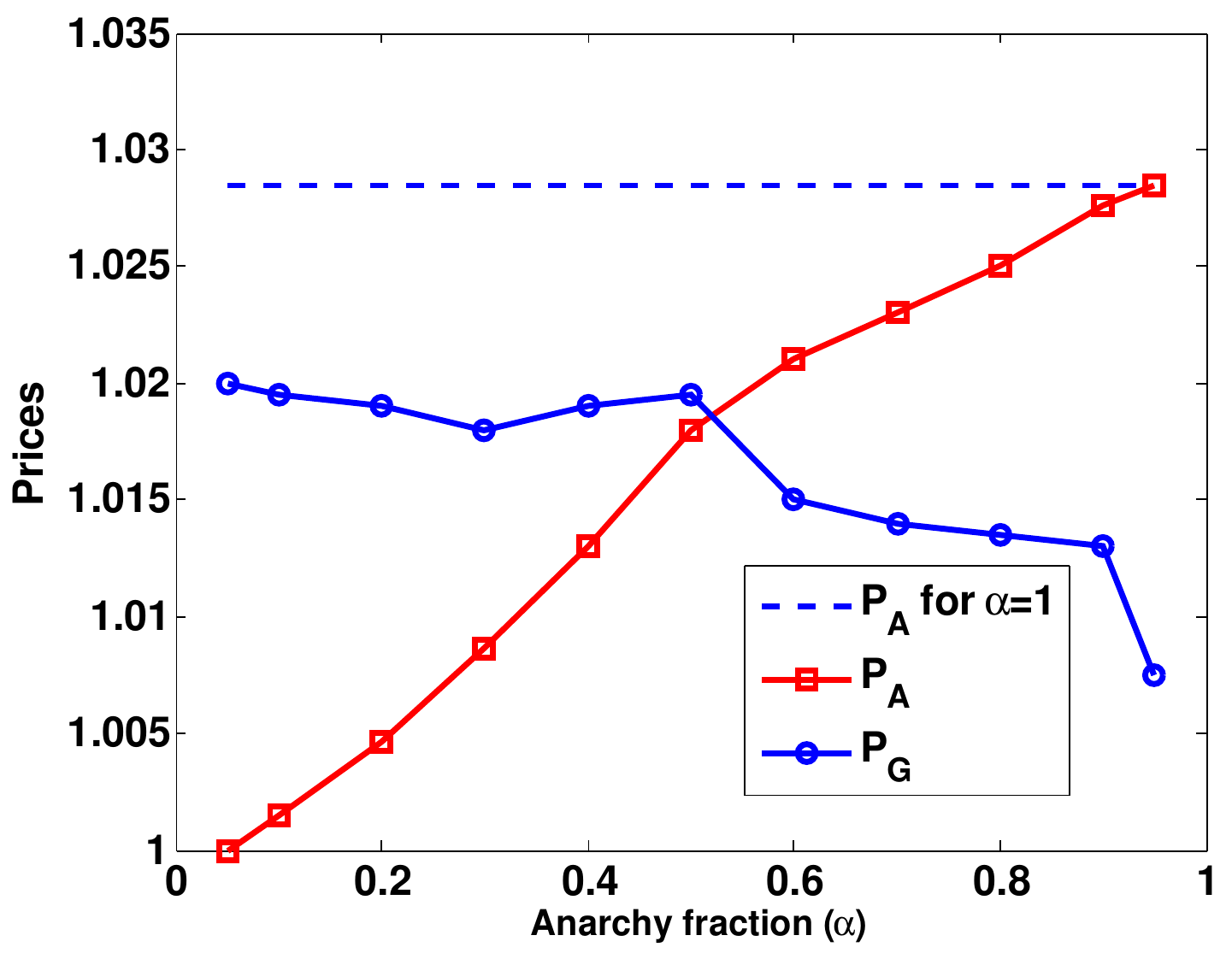}
	\caption{HetGame equilibrium for Sioux Falls network showing the price of $\alpha$-anarchy and Price of good behavior. Increase in the price of anarchy and decrease in the price of good behavior with increasing fraction of anarchists may motivate more people to be socialists and follower of central directives.}
	\label{fig:sioux}
\end{figure}

The proposed Algorithm-1 is used to compute the HetGame equilibrium flows.  In each iteration, the anarchist flow is computed holding the socialist flow constant; similarly, the socialist flow is computed holding the anarchist flow constant.  If the updated flows do not differ significantly from those in the previous iteration, the algorithm may terminate.  Fig. \ref{fig:sioux} shows the price of $\alpha$-anarchy and price of good behavior of the HetGame equilibrium for various $\alpha$ values. 

Traditional definition of price of anarchy (with $\alpha=1$) is also shown in the figure. It can be seen that price of good behavior decreases with increasing fraction of anarchist. It also shows that increasing anarchy will hurt anarchists also. Both of the above observation can lead to motivating more people to be follower of central directives.

\section{Conclusions and Future Work}
In this work, we derived a framework for general traffic network with heterogeneous users, including an $\alpha$-anarchy HetGame and studied it both analytically and via numerical simulation on road network models. We discussed how social flow can be used to affect the flow of anarchists in desired direction. The work has many possible extensions. For example, HetGame with more number of user classes (e.g., proportionally fair, random, and fixed path-followers) can be studied. It is also interesting to analyze these systems in the presence of random noise.  We can also consider the case where the socialists in the HetGame could modify their objective to be minimize of average socialist cost, neglecting the anarchist portion of the full social cost. Such strategy has potential to reduce the price of good behavior providing incentive to people who wish to adhere to central directives.
\appendices

\section{Proof for Theorem \ref{LemmaNEGen}}
\label{proofforlemma}
\begin{proof}
Given the socialist strategy $x$, the anarchist will decide their flow by computing the NE for their flow $\alpha y$. Given $x$, the latency they face in any edge $e$ is given by
$L_e(\sum_{P \ni e}{\alpha y_{P}})$ where $L(w)=l_e(\sum_{P \ni e}{(1-\alpha)x_{P}+\alpha w})$ for any strategy $y$. Let us assume $y'_e=\sum_{P \ni e}{\alpha y_{P}}$. With these new latency function, the NE of the flow following anarchist strategy will be equal to Wardrop equilibrium \cite{Wardrop1952} which is given as the solution of the following problem
\begin{align*}
&\min_y \sum_{e}\int_0^{y'_e}L_e(w)dw\\
&\text{such that } y'_e=\sum_{P \ni e}{\alpha y_{P}}, \sum_{P_k}{y_{P_k}}=d_k
\end{align*}
which is equivalent to
\begin{align*}
&\min \sum_{e}\int_0^{y'_e}l_e(\sum_{P \ni e}{(1-\alpha)x_{P}+w})dw\\
&\text{such that } y'_e=\sum_{P \ni e}{\alpha y_{P}}, \sum_{P_k}{y_{P_k}}=d_k
\end{align*}
which will give the desired result with substitution $y'_e=\alpha y_e$ and $z=\alpha w$.
\end{proof}

\end{document}